\documentclass{article}
\usepackage{spconf,amsmath,graphicx,hyperref}
\usepackage{cite}
\usepackage{amsthm}
\usepackage{amsmath,amssymb,amsfonts}
\usepackage{algorithmic}
\usepackage{graphicx}
\usepackage{textcomp}
\usepackage{xcolor}
\usepackage{cleveref}
\usepackage{enumitem}
\usepackage{etoolbox}
\usepackage{graphicx}
\usepackage{makecell}
\usepackage{caption}
\usepackage{wrapfig}
\usepackage{setspace}
\usepackage{subcaption}
\usepackage{floatflt}
\usepackage{float}
\usepackage{setspace}
\usepackage{tikz}

\renewrobustcmd{\bfseries}{\fontseries{b}\selectfont}
\renewrobustcmd{\boldmath}{}
\newrobustcmd{\B}{\bfseries}

\newsavebox\CBox

\makeatletter

\renewcommand{\section}{\@startsection
   {section}%
   {1}%
   {}%
   {-0.2\baselineskip}%
   {0.2\baselineskip}%
   {}}%

\renewcommand{\subsection}{\@startsection
  {subsection}%
  {2}%
  {}%
  {-0.2\baselineskip}%
  {0.1\baselineskip}%
  {}}%

\renewcommand{\subsubsection}{\@startsection
  {subsubsection}%
  {3}%
  {}%
  {-0.2\baselineskip}%
  {0.1\baselineskip}%
  {}}%
\newcommand{\unnumberedsection}[1]{
    \@startsection{section}{1}{2.7cm}{-0.1\baselineskip}{0.1\baselineskip}{\normalfont\footnotesize\bfseries}*{#1}
}

\g@addto@macro\normalsize{%
  \setlength\abovedisplayskip{3pt plus 2pt minus 1pt}
  \setlength\belowdisplayskip{3pt plus 2pt minus 1pt}
  \setlength\abovedisplayshortskip{2pt plus 2pt minus 1pt}
  \setlength\belowdisplayshortskip{2pt plus 2pt minus 1pt}
}

\newtheorem{theorem}{Theorem}
\newtheorem{lemma}{Lemma}

\DeclareMathOperator*{\argmax}{argmax}

\title{Sequence-Level Unsupervised Training in Speech Recognition: A Theoretical Study}
\name{Zijian Yang$^1$, Jörg Barkoczi$^1$, Ralf Schlüter$^{1,2}$, Hermann Ney$^{1,2}$}
\address{$^1$Human Language Technology and Pattern Recognition, Computer Science Department\\
RWTH Aachen University, Germany\\
$^2$AppTek GmbH, Germany}
\begin{document}
%
\maketitle

\begin{abstract}
Unsupervised speech recognition is a task of training a speech recognition model with unpaired data. To determine when and how unsupervised speech recognition can succeed, and how classification error relates to candidate training objectives, we develop a theoretical framework for unsupervised speech recognition grounded in classification error bounds. We introduce two conditions under which unsupervised speech recognition is possible. The necessity of these conditions are also discussed. Under these conditions, we derive a classification error bound for unsupervised speech recognition and validate this bound in simulations. Motivated by this bound, we propose a single-stage sequence-level cross-entropy loss for unsupervised speech recognition.
\end{abstract}
\begin{keywords}
Unsupervised learning, speech recognition, classification error bound, unsupervised speech recognition
\end{keywords}
\section{Introduction \& Related Work}
\label{sec:intro}
Unsupervised speech recognition is a task to train an automatic speech recognition (ASR) model with unpaired speech and text data. Since training requires no annotated data, it is well suited to low-resource languages with limited transcriptions. Recent works \cite{liu18g_interspeech,yeh2019unsupervised, chen19e_interspeech, baevski2021unsupervised, liu2023towards} have shown great success in unsupervised speech recognition. 

Most previous works \cite{chen19e_interspeech,baevski2021unsupervised, liu2023towards} use a GAN-based criterion \cite{goodfellow2020generative}. \cite{wang2024unsupervised} proposed a training criterion based on the $\ell_1$ distance. However, these criteria are defined under a deterministic mapping from speech units to phoneme labels, whereas modern ASR systems are inherently statistical models. Consequently, these methods typically follow a two-stage pipeline: (i) unsupervised training to obtain an initial mapping function, and (ii) semi-supervised training of a standard ASR model (e.g., CTC\cite{graves2006connectionist}) on pseudo-labels generated by that mapping. However, whether a unified single-stage training criterion for unsupervised speech recognition exists for statistical models remains an open question.

The theoretical background also requires further investigation. In general, it is impossible to obtain the joint distribution from marginal distributions without any constraints/assumptions. \cite{wang-etal-2023-theory} built up a theoretical framework for GAN-based approaches and discussed sufficient conditions that allow unsupervised speech recognition. They also derived conditions guaranteeing the convergence to a global optimum. However, the behavior of the model when training does not reach the global optimum remains unclear. In particular, the relationship between the training loss and the sequence classification error is not established. 

To address these limitations, we introduce a theoretical framework for sequence-level unsupervised training based on classification error bounds \cite{ney2003relationship,nussbaum2013relative,yang2025classification}. In this work, we:

\begin{itemize}
\setlength\itemsep{0.1em}
    \item consider a statistical model rather than a deterministic function;
    \item introduce and discuss conditions which enables unsupervised speech recognition
    \item derive a classification error bound for unsupervised training in speech recognition
    \item propose a sequence-level cross-entropy loss for statistical models based on the theoretical bound 
\end{itemize}

\section{Classification Error Mismatch in ASR}

For statistical sequence classification tasks such as ASR, classification error is an important performance measure \cite{ney2003relationship,nussbaum2013relative, yang2025classification}. Let $x_1^N \in \mathcal{X}^N$ be an observation sequence with length $N$, where $\mathcal{X}$ is a set of speech units, and $c_1^N \in \mathcal{C}^N$ be a label sequence, where $\mathcal{C}$ is a set of discrete labels. In this work, we consider discrete speech units and assume $|\mathcal{X}| > |\mathcal{C}|$. Nevertheless, main conclusions should extend to continuous speech units by carefully replacing summations with integrals and probability masses with probability densities. In ASR, the recognition/decoding result is generated by using a decision rule, which is a mapping function from the observation sequence $x_1^N$ to the label sequence $c_1^N$. By utilizing the underlying true distribution, Bayes decision rule minimizes the expected error given an error function $L[c_1^N, \Tilde{c}_1^N]$. To simplify the discussion, we ignore the alignment problem between sequences and assume that the observation and label sequences have the same length $N$. Following \cite{yang2025classification}, we apply the averaged Hamming distance $L[c_1^N, \Tilde{c}_1^N] := \frac{1}{N}\sum_{n=1}^N [1 - \delta(c_n, \Tilde{c}_n)]$ as the error function between two label sequences, where $\delta$ is Kronecker-Delta. Let $pr(x_1^N, c_1^N)$ be the true joint distribution of sequence pairs $(x_1^N, c_1^N)$, the Bayes decision rule is defined as follows \cite{yang2025classification}:
\begin{equation*}
    c_{*,1}^N(x_1^N) =\left( \big(c_n^*(x_1^N)\big)_{n=1}^N| c_n^*(x_1^N) = \argmax_c pr_n(c,x_1^N) \right),
\end{equation*}
where $pr_n(c,x_1^N) := \sum_{c_1^N: c_n=c} pr(c_1^N, x_1^N)$.
However, in practice, the underlying true distribution $pr$ used in Bayes decision rule is not available. Therefore, we train a model to estimate the true distribution on the training data, and replace the true distribution $pr$ with the model distribution $q$ in the decision rule. The model-based decision rule is defined as:
\begin{equation*}
    c_{q,1}^N = \left(\big(c_n^q(x_1^N)\big)_{n=1}^N|c_n^q(x_1^N)= \argmax_c q_n(c,x_1^N)\right),
\end{equation*}
where $q_n(c,x_1^N): = \sum_{c_1^N: c_n=c} q(c_1^N, x_1^N)$. The mismatch between the true distribution $pr$ and the model distribution $q$ leads to a corresponding mismatch in classification errors between the Bayes decision rule and the model-based decision rule. As discussed in \cite{yang2025classification}, the classification error mismatch for sequences is defined as:\\
\scalebox{0.94}{\parbox{1.06\linewidth}{
\begin{equation*}
        \overline{\Delta_q} = \frac{1}{N} \sum_n \Delta_q^n
\end{equation*}
}}
where\\
\scalebox{0.94}{\parbox{1.06\linewidth}{
\[\Delta_q^n = \sum_{x_1^N} \Big [pr_n\big(c_n^*(x_1^N), x_1^N\big) - pr_n\big(c_n^q(x_1^N),x_1^N\big) \Big]\] 
}}
is the local error mismatch. By definition, $\overline{\Delta}_q \geq 0$.
\cite{nussbaum2013relative, yang2025classification} have shown that in the supervised training setting, where the joint distribution $pr(c_1^N, x_1^N)$ is accessible in the training data, $\overline{\Delta_q}$ can be bounded by the KL divergence between $pr(c_1^N, x_1^N)$ and $q(c_1^N, x_1^N)$. This result provides a theoretical justification for the use of cross-entropy loss.

In contrast, under the unsupervised training scenario, the joint distribution is not available. Consequently, it remains unclear whether (and how) one can define an appropriate training criterion to minimize $\overline{\Delta}_q$.
\section{Unsupervised Speech Recognition}
\subsection{Problem Statement}
\label{sec:problem}
For unsupervised training, instead of the joint distribution $pr(x_1^N, c_1^N)$, only the marginal distributions $pr(x_1^N)$ and $pr(c_1^N)$ are available. Unlike previous works \cite{baevski2021unsupervised, wang-etal-2023-theory, wang2024unsupervised, liu2023towards}, we consider the conditional distribution $q(x | c)$ of a generative model. This modeling approach has two potential advantages: it is more natural to sum over $x_1^N$ samples from the training data while modeling the language model probability $pr(c_1^N)$ in accordance with standard ASR training practices; and it facilitates the transfer of theoretical results from the discrete case of $x$ to the continuous case when working with $q(x|c)$.
The model conditional distribution is defined as follows:\\
\scalebox{0.94}{\parbox{1.06\linewidth}{
\begin{align*}
    q(x_1^N|c_1^N) := \prod_{n=1}^N q(x_n|c_n)
\end{align*}
}}
Furthermore, we assume the availability of sufficient labeled data such that the label sequence prior can be modeled exactly, i.e., $q(c_1^N) = pr(c_1^N)$. The model marginal distribution of observation sequences is defined as:\\
\scalebox{0.94}{\parbox{1.06\linewidth}{
\begin{align*}
    q(x_1^N) = \sum_{c_1^N} pr(c_1^N) q(x_1^N|c_1^N)
\end{align*}
}}
In general, it is impossible to upper-bound $\overline{\Delta}_q$ by the difference between $pr(x_1^N)$ and $q(x_1^N)$ without further conditions. A simple case is to have $N=1$, which is refered to as single event case in the following discussions. We can then omit the position index $n$. Consider the exact-match case, i.e. $q(x) = pr(x)$. Given the true distributions, there are in total $|\mathcal{X}|$ linear equations for $|\mathcal{X}| \cdot |\mathcal{C}|$ variables $\{q(x|c)\}$. In general, there is no unique solution of $q(x|c)$. Moreover, it is possible to come up with $pr$ and $q$ that $pr(x)=q(x)$ while $\overline{\Delta}_q>0$ .
Consequently, we must specify conditions that guarantee the possibility of non-trivial proper solutions to unsupervised speech recognition.

Since $\overline{\Delta_q}$ is not straightforward to compute, we apply the $\ell_1$ distance bound of $\overline{\Delta_q}$ derived in \cite{ney2003relationship}:\\
\scalebox{0.93}{\parbox{1.07\linewidth}{
\begin{align}
\label{eq:basic}
    \overline{\Delta_q} \leq \frac{1}{N}\sum_{n=1}^N \sum_{x_1^N}\sum_c |pr_n(c,x_1^N) - q_n(c,x_1^N)| :=\overline{D}_q,
\end{align}
}}
and consider $\overline{D}_q$ instead of $\overline{\Delta}_q$. 
In the next section, we will introduce two conditions that can make $\overline{D}_q$ bounded by the $\ell_1$ distance between $pr(x_1^N)$ and $q(x_1^N)$.

\subsection{Sufficient Conditions for Unsupervised Training}
Firstly, for the model to perfectly capture the true distribution, both must share the same structural form. This leads to our first condition: the true distribution should have the same decomposition as the model distribution, i.e., the true conditional distribution can be factorized in the same way as the model distribution. This condition is referred to as the structure constraint in the following discussions.\\
\scalebox{0.94}{\parbox{1.06\linewidth}{
    \begin{align*}
        pr(x_1^N|c_1^N) = \prod_{n=1}^N pr(x_n|c_n) 
    \end{align*}
}}
\cite{baevski2021unsupervised, liu2023towards, wang2024unsupervised} achieved successful unsupervised training with a localized mapping from segment representations to phonemes. Therefore, we assume this decomposition with local dependencies is valid in practice.\\
Secondly, the labels must be mutually distinguishable. For instance, if replacing one label $c$ with another label $c'$ leaves the label sequence prior probability unchanged, then these two labels cannot be distinguished from the marginal distribution, making it impossible to recover the underlying joint distribution. This leads to our second condition: the language model matrix\\
\scalebox{0.94}{\parbox{1.06\linewidth}{
\[\mathbf{P}_C \in \mathbb{R}_{\geq0}^{N \times |\mathcal{C}|}: \left (\mathbf{P}_C\right )_{n,c} = pr_n(c) :=\sum_{c_1^N: c_n=c}pr(c_1^N)\]
}}
should have full column rank, where $pr_n(c)$ is the marginal probability of having label $c$ at position $n$. This constraint indicates that the labels cannot replace each other, even not possible for linear combinations, in terms of position-dependent unigram probabilities. We computed the smallest singular value $\sigma_{\min}$ of $\mathbf{P}_C$ on LibriSpeech transcriptions and obtained
$\sigma_{\min}\approx 3\times 10^{-4}$.
While small, this value is clearly non-zero, suggesting that $\mathbf{P}_C$ is numerically full rank on the evaluated data.
Moreover, as the amount of data increases,
$\sigma_{\min}(\mathbf{P}_C)$ typically increases.
We therefore believe this assumption is likely to hold in practice.

Under these conditions, we obtain the following bound:
\begin{theorem}

\label{theorem:seqineq}
When $\mathbf{P}_C$ has full column rank, and the true distribution satisfies the structure assumption, the following inequality holds:
    \begin{align}
     \label{eq:global_seq}
 \overline{D}_q\leq N^2\|P_C^{+}\|_1 \sum_{x_1^N} |pr(x_1^N) - q(x_1^N)|, 
    \end{align}
where $\mathbf{P}_C^+ = ({\mathbf{P}^\top_C}\mathbf{P}_C)^{-1}\mathbf{P}^\top_C$ is the left-inverse of $\mathbf{P}_C$, and $ \|\mathbf{P}_C^{+}\|_1 $ is the induced $\ell_1$ norm of $\mathbf{P}_C^{+}$.
\end{theorem}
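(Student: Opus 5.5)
The plan is to pass through the per-label emission differences $e_c(x) := pr(x|c) - q(x|c)$. First I will bound $\sum_{x,c}|e_c(x)|$ by the $\ell_1$ distance of the observation marginals, using the full-column-rank assumption on $\mathbf{P}_C$. Then I will bound $\overline{D}_q$ by $\sum_{x,c}|e_c(x)|$, using the structure constraint together with a telescoping argument on products. Both $pr$ and $q$ share the label prior $pr(c_1^N)$ and factorize over positions, so every discrepancy between them must come from the emission tables.

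\textbf{Step 1 (position-wise marginals).} For each position $n$ and symbol $x\in\mathcal{X}$, let $pr_n(x) := \sum_{x_1^N: x_n = x} pr(x_1^N)$, and define $q_n(x)$ analogously. Under the structure constraint, summing out all positions $m \neq n$ (each factor $pr(x_m|c_m)$ sums to one) gives $pr_n(x) = \sum_c pr_n(c)\, pr(x|c)$. The same holds for $q$, because $q(c_1^N) = pr(c_1^N)$. Hence, writing $\mathbf{d}(x) := (pr_n(x) - q_n(x))_{n=1}^N$ and $\mathbf{e}(x) := (e_c(x))_{c\in\mathcal{C}}$, we have the linear system $\mathbf{d}(x) = \mathbf{P}_C\, \mathbf{e}(x)$ for every $x$. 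Since $\mathbf{P}_C$ has full column rank, $\mathbf{e}(x) = \mathbf{P}_C^{+}\mathbf{d}(x)$, and therefore $\|\mathbf{e}(x)\|_1 \le \|\mathbf{P}_C^+\|_1 \|\mathbf{d}(x)\|_1$. Summing over $x$ and using that marginalization contracts the $\ell_1$ distance, i.e. $\sum_x |pr_n(x) - q_n(x)| \le \sum_{x_1^N}|pr(x_1^N) - q(x_1^N)|$ for each $n$, yields
\[
\sum_{x}\sum_c |e_c(x)| \le N\,\|\mathbf{P}_C^+\|_1 \sum_{x_1^N}|pr(x_1^N) - q(x_1^N)|.
\]

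\textbf{Step 2 (from emissions to $\overline{D}_q$).} Fix $n$ and apply the triangle inequality inside the sum over label sequences with $c_n = c$:
\[
\sum_{x_1^N}\sum_c |pr_n(c,x_1^N) - q_n(c,x_1^N)| \le \sum_{c_1^N} pr(c_1^N) \sum_{x_1^N}\Big|\prod_m pr(x_m|c_m) - \prod_m q(x_m|c_m)\Big|.
\]
Next I will telescope the difference of products, replacing one factor at a time. Each remaining factor is a conditional distribution summing to one over its $x_m$, so the inner sum is at most $\sum_{m=1}^N \sum_x |e_{c_m}(x)|$. Exchanging the order of summation turns the weight into $pr_m(c)$, and the crude bound $pr_m(c) \le 1$ gives at most $N \sum_{x,c}|e_c(x)|$. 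This bound is independent of $n$, so averaging over $n$ in the definition of $\overline{D}_q$ leaves it unchanged. Combining with Step 1 gives the claimed factor $N \cdot N\,\|\mathbf{P}_C^+\|_1 = N^2\|\mathbf{P}_C^+\|_1$.

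The main obstacle is Step 1: one must see that the right object to invert is the per-symbol system $\mathbf{d}(x) = \mathbf{P}_C\mathbf{e}(x)$, and that the structure constraint is exactly what makes the position-wise observation marginals linear in the emission tables with coefficient matrix $\mathbf{P}_C$. The rest is bookkeeping of constants. The factor $N^2$ arises from two loose steps, the contraction inequality applied separately at each of the $N$ positions and the bound $pr_m(c)\le 1$ across $N$ telescoping terms, so I will not try to tighten them beyond what the statement requires.
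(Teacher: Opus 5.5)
Your proposal is correct and follows the paper's proof essentially step for step. Your Step~1 is Lemma~\ref{lemma:local} with $p=1$ combined with the contraction $\sum_x|pr_n(x)-q_n(x)|\le\sum_{x_1^N}|pr(x_1^N)-q(x_1^N)|$, and your Step~2 is the triangle inequality plus the telescoping Lemma~\ref{lemma:telescope}, with your bound $pr_m(c)\le 1$ playing the role of the paper's $\sum_j\max_c pr_j(c)\le N$; the only difference is the order in which the two steps are presented.
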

In the following, we will derive the proof of Theorem \ref{theorem:seqineq} with two lemmas. 
We first compare the local distributions by introducing the following $\ell_p$ distances:\\
\scalebox{0.93}{\parbox{1.07\linewidth}{
\begin{align*}
    d^\text{cond}_p(pr, q) &: = \big (\sum_{x\in \mathcal{X}}\sum_{c\in \mathcal{C}} |pr(x|c) -q(x|c)|^p \big)^\frac{1}{p}\\
    d_p^{\text{pos}}(pr,q) &:= \big( \sum_{n=1}^N\sum_{x\in \mathcal{X}} |pr_n(x) - q_n(x)|^p \big )^\frac{1}{p}\\
\end{align*}
}}
where $pr_n(x) =\sum_{x_1^N: x_n =x} pr(x_1^N)$ is the marginal position-dependent unigram probability. $q_n(x)$ is similarly computed by $q(x_1^N)$.
Based on the structure constraint, we have:\\
\scalebox{0.93}{\parbox{1.07\linewidth}{
\begin{align*}
     \sum_c pr_n(c) \big( pr(x|c) -q(x|c)\big) =  pr_n(x) - q_n(x)
\end{align*}
}}
Rewrite the above equation in matrix form, we have:
\begin{align}
\label{eq:loca}
    \mathbf{P}_C \big( \mathbf{pr}^C_{x} - \mathbf{q}^C_{x}\big) = \mathbf{pr}^N_{x} - \mathbf{q}^N_{x},
\end{align}
where $\mathbf{pr}^C_{x} \in \mathbb{R}_{\geq0}^{|\mathcal{C}|}$ is a vector, with $\left(\mathbf{pr}^C_{x} \right)_c = pr(x|c)$, and $\mathbf{pr}^N_{x} \in \mathbb{R}_{\geq0}^{N}$ is a vector, with $\left( \mathbf{pr}^N_{x}\right )_n = pr_n(x)$. $\mathbf{q}^C_{x}$ and $\mathbf{q}^N_{x}$ are similarly defined for model distribution $q$. With the expression in the matrix form, we have:
\begin{lemma}
\label{lemma:local}
When $\mathbf{P}_C$ is full column rank, the $l_p-$distance between $\{pr(x|c)\}$ and $\{q(x|c)\}$ is bounded by the following inequality:\\
\scalebox{0.94}{\parbox{1.06\linewidth}{
    \begin{align*}
      d^\text{cond}_p(pr, q) \leq \|\mathbf{P}_C^+\|_p  \cdot  d^\text{pos}_p(pr, q) \\
    \end{align*}
    }}
    \scalebox{0.94}{\parbox{1.06\linewidth}{
    where
    \begin{align*}
        \|\mathbf{P}_C^{+}\|_p = \sup_{\mathbf{v}\neq 0}\frac{\|\mathbf{P}_C^{+}\mathbf{v}\|_p}{\|\mathbf{v}\|_p}
    \end{align*}
    \vspace{-0.1em}
}}
\end{lemma}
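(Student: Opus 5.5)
The plan is to work one observation $x\in\mathcal{X}$ at a time using the matrix identity \eqref{eq:loca}, invert $\mathbf{P}_C$ on the left, and then sum over $x$. Since $\mathbf{P}_C$ has full column rank, $\mathbf{P}_C^\top\mathbf{P}_C$ is invertible. Hence $\mathbf{P}_C^+=(\mathbf{P}_C^\top\mathbf{P}_C)^{-1}\mathbf{P}_C^\top$ is well defined and satisfies $\mathbf{P}_C^+\mathbf{P}_C=\mathbf{I}_{|\mathcal{C}|}$. Multiplying \eqref{eq:loca} on the left by $\mathbf{P}_C^+$ gives, for every $x$,
\begin{align*}
\mathbf{pr}^C_x-\mathbf{q}^C_x=\mathbf{P}_C^+\big(\mathbf{pr}^N_x-\mathbf{q}^N_x\big).
\end{align*}
Before doing this, I would recheck the derivation of \eqref{eq:loca} itself. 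It uses the structure constraint to write $pr_n(x)=\sum_c pr_n(c)\,pr(x|c)$ by marginalising $pr(x_1^N)=\sum_{c_1^N}pr(c_1^N)\prod_m pr(x_m|c_m)$ over all positions except $n$. The same holds for $q$, because $q(c_1^N)=pr(c_1^N)$ and $q(x_1^N|c_1^N)$ factorises in the same way. Both distributions therefore share the same matrix $\mathbf{P}_C$, and this is what makes the left inversion legitimate.

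Next, take the $\ell_p$ norm of both sides and apply the definition of the induced norm $\|\mathbf{P}_C^+\|_p$:
\begin{align*}
\|\mathbf{pr}^C_x-\mathbf{q}^C_x\|_p\le\|\mathbf{P}_C^+\|_p\,\|\mathbf{pr}^N_x-\mathbf{q}^N_x\|_p .
\end{align*}
Raise this to the $p$-th power and sum over $x\in\mathcal{X}$. The left side becomes $\sum_x\sum_c|pr(x|c)-q(x|c)|^p=\big(d^{\text{cond}}_p(pr,q)\big)^p$. The right side becomes $\|\mathbf{P}_C^+\|_p^p\sum_x\sum_n|pr_n(x)-q_n(x)|^p=\|\mathbf{P}_C^+\|_p^p\big(d^{\text{pos}}_p(pr,q)\big)^p$. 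Taking $p$-th roots, which is valid since everything is nonnegative, yields the lemma. For $p=\infty$ the same argument works with a maximum over $x$ in place of the sum.

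The step that needs care is the correct use of the induced norm and the summation over $x$. The constant $\|\mathbf{P}_C^+\|_p$ does not depend on $x$, so it factors out of the sum uniformly. This is why the bound holds with the operator norm rather than something larger. There is no real obstacle beyond confirming that \eqref{eq:loca} holds with the same $\mathbf{P}_C$ for both $pr$ and $q$, which I addressed above.
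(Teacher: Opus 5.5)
Your proof is correct and follows essentially the same route as the paper: left-multiply \eqref{eq:loca} by $\mathbf{P}_C^+$, apply the induced-norm inequality for each $x$, raise to the $p$-th power, sum over $x$, and take $p$-th roots. Your added check that \eqref{eq:loca} holds with the same $\mathbf{P}_C$ for both $pr$ and $q$ uses the structure constraint and $q(c_1^N)=pr(c_1^N)$; the paper leaves this step implicit.
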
\noindent
\textbf{Proof of Lemma \ref{lemma:local}}\\
    When $\mathbf{P}_C$ is column full rank, the left-inverse $\mathbf{P}^+_C$ exists. More specifically, we have:\\
    \scalebox{0.94}{\parbox{1.06\linewidth}{
    \begin{align*}
        \mathbf{pr}^C_{x} - \mathbf{q}^C_{x} = \mathbf{P}_C^+ (\mathbf{pr}^N_{x} - \mathbf{q}^N_{x})
    \end{align*}
    }}
    By computing the $l_p-$norm on both sides, we obtain:\\
    \allowdisplaybreaks
    \scalebox{0.93}{\parbox{1.07\linewidth}{
    \begin{align*}
       & \| \mathbf{pr}^C_{x} - \mathbf{q}^C_{x} \|^p_p = \| \mathbf{P}_C^+ (\mathbf{pr}^N_{x} - \mathbf{q}^N_{x}) \|^p_p
        \leq \| \mathbf{P}_C^+ \|^p_p   \| \mathbf{pr}^N_x - \mathbf{q}^N_x \|_p^p\\
        &\Rightarrow \sum_x  \| \mathbf{pr}^C_{x} - \mathbf{q}^C_{x} \|^p_p \leq  \| \mathbf{P}_C^+ \|^p_p  \sum_x \| \mathbf{pr}^N_x - \mathbf{q}^N_x \|_p^p\\
        & \Rightarrow  d^\text{cond}_p(pr, q)  \leq \| \mathbf{P}_C^+ \|_p \cdot  d^\text{pos}_p(pr, q) 
    \end{align*}
    }}

\begin{lemma}
\label{lemma:telescope}
    The $\ell_1$ distance between two sequence conditional distributions is bounded by the following inequality:\\
    \scalebox{0.92}{\parbox{1.08\linewidth}{
    \begin{align*}
        &|pr(x_1^N|c_1^N) - q(x_1^N|c_1^N)| = \left| \prod_{n=1}^N pr(x_n|c_n) - \prod_{n=1}^N q(x_n|c_n) \right|\\
        & =  \left |\sum_{j=1}^N \prod_{n=1}^{j-1} pr(x_n|c_n) \prod_{k=j+1}^N q(x_k|c_k) \cdot\big( pr(x_j|c_j) - q(x_j|c_j) \big)\right|\\
        &\leq  \sum_{j=1}^N \prod_{n=1}^{j-1} pr(x_n|c_n) \prod_{k=j+1}^N q(x_k|c_k) \cdot \left |pr(x_j|c_j) - q(x_j|c_j) \right|
    \end{align*}
    }}
    
\end{lemma}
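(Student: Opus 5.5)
The plan is a standard telescoping (hybrid) argument over the positions $n=1,\dots,N$. The first equality in Lemma~\ref{lemma:telescope} needs no work: by the structure constraint, $pr(x_1^N|c_1^N)=\prod_{n=1}^N pr(x_n|c_n)$, and by the definition of the model, $q(x_1^N|c_1^N)=\prod_{n=1}^N q(x_n|c_n)$. So the task reduces to writing the difference of two products of $N$ factors as a sum of $N$ terms, each containing exactly one difference of factors.

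For fixed $x_1^N$ and $c_1^N$, I will introduce the hybrid products
\begin{align*}
A_j := \prod_{n=1}^{j} pr(x_n|c_n) \prod_{k=j+1}^{N} q(x_k|c_k), \qquad j=0,\dots,N,
\end{align*}
with the convention that an empty product equals $1$. Then $A_N=\prod_{n} pr(x_n|c_n)$ and $A_0=\prod_n q(x_n|c_n)$. The difference telescopes:
\begin{align*}
A_N-A_0=\sum_{j=1}^N (A_j-A_{j-1}).
\end{align*}
The products $A_j$ and $A_{j-1}$ agree in every factor except the one at position $j$. That factor is $pr(x_j|c_j)$ in $A_j$ and $q(x_j|c_j)$ in $A_{j-1}$. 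Factoring out the common part gives
\begin{align*}
A_j-A_{j-1}=\prod_{n=1}^{j-1} pr(x_n|c_n)\prod_{k=j+1}^N q(x_k|c_k)\,\big(pr(x_j|c_j)-q(x_j|c_j)\big).
\end{align*}
This is exactly the $j$-th summand of the second line of the lemma, so that equality follows.

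For the final inequality, I will apply the triangle inequality to the absolute value of the sum. I will then use that every factor $pr(x_n|c_n)$ and $q(x_k|c_k)$ is a probability and hence nonnegative. This lets the absolute value pass through the product prefactor and land only on $pr(x_j|c_j)-q(x_j|c_j)$.

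There is no substantive obstacle. The only care needed is bookkeeping:
\begin{itemize}
\item the index ranges and the empty-product convention at $j=1$ and $j=N$;
\item the trivial case $N=1$, where the sum has a single term and the statement reduces to an identity.
\end{itemize}
This pointwise bound is what will later be summed against $pr(c_1^N)$ and over $x_1^N$, where the prefactors marginalize to $1$ and Lemma~\ref{lemma:local} with $p=1$ can be applied.
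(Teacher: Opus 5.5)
Your proposal is correct and matches the paper's argument: the paper also proves the lemma by a telescoping sum that replaces $pr(x_n|c_n)$ with $q(x_n|c_n)$ one position at a time, then applies the triangle inequality using nonnegativity of the prefactors. Your hybrid products $A_j$ simply write that telescoping out explicitly.
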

    The derivation utilizes a telescope sum, by replacing the term $pr(x_n|c_n)$ in the product with $q(x_n|c_n)$ at each step.
\textbf{Proof of Theorem \ref{theorem:seqineq}}\\
 \scalebox{0.94}{\parbox{1.06\linewidth}{
\begin{align*}
     &\overline{D}_q \underbrace{\leq \sum_{x_1^N,c_1^N} |pr(c_1^N, x_1^N) - q(c_1^N, x_1^N)|}_{\text{triangular inequality}}\\
     &= \sum_{c_1^N} pr(c_1^N) \sum_{x_1^N}  \underbrace{\left| \prod_{n=1}^N pr(x_n|c_n) - \prod_{n=1}^N q(x_n|c_n)\right|}_{\text{apply Lemma \ref{lemma:telescope}}} \notag \\
    & \leq \sum_{c_1^N} pr(c_1^N)\sum_{x_1^N} \sum_{j=1}^N \prod_{n=1}^{j-1} pr(x_n|c_n) \cdot \\
    &\qquad \prod_{k=j+1}^N q(x_k|c_k) |pr(x_j|c_j) - q(x_j|c_j)|  \notag \\
    & =  \sum_{j=1}^N \sum_{c_1^N} pr(c_1^N)\sum_x |pr(x|c_j) - q(x|c_j)| \\
    & =  \sum_{j=1}^N \sum_{c} pr_j(c)\sum_x |pr(x|c) - q(x|c)|\\
    & \leq \sum_{j=1}^N \max_c pr_j(c) \underbrace{\sum_c\sum_x |pr(x|c) -q(x|c)|}_{\leq \|\mathbf{P}_C^{+}\|_1  d^\text{pos}_1(pr, q) 
    \text{ c.f. Lemma \ref{lemma:local}}}\\
    & \leq \underbrace{\sum_{j=1}^N \max_c pr_j(c)}_{\leq N} \|\mathbf{P}_C^{+}\|_1  \sum_n\underbrace{\sum_{x}|pr_n(x) -q_n(x)|}_{\leq \sum_{x_1^N}|pr(x_1^N) - q(x_1^N)|} \\
    &\leq N^2  \|\mathbf{P}_C^{+}\|_1  \sum_{x_1^N}|pr(x_1^N) - q(x_1^N)|
\end{align*}
}}
\begin{figure}
    \centering
    \includegraphics[scale=0.2]{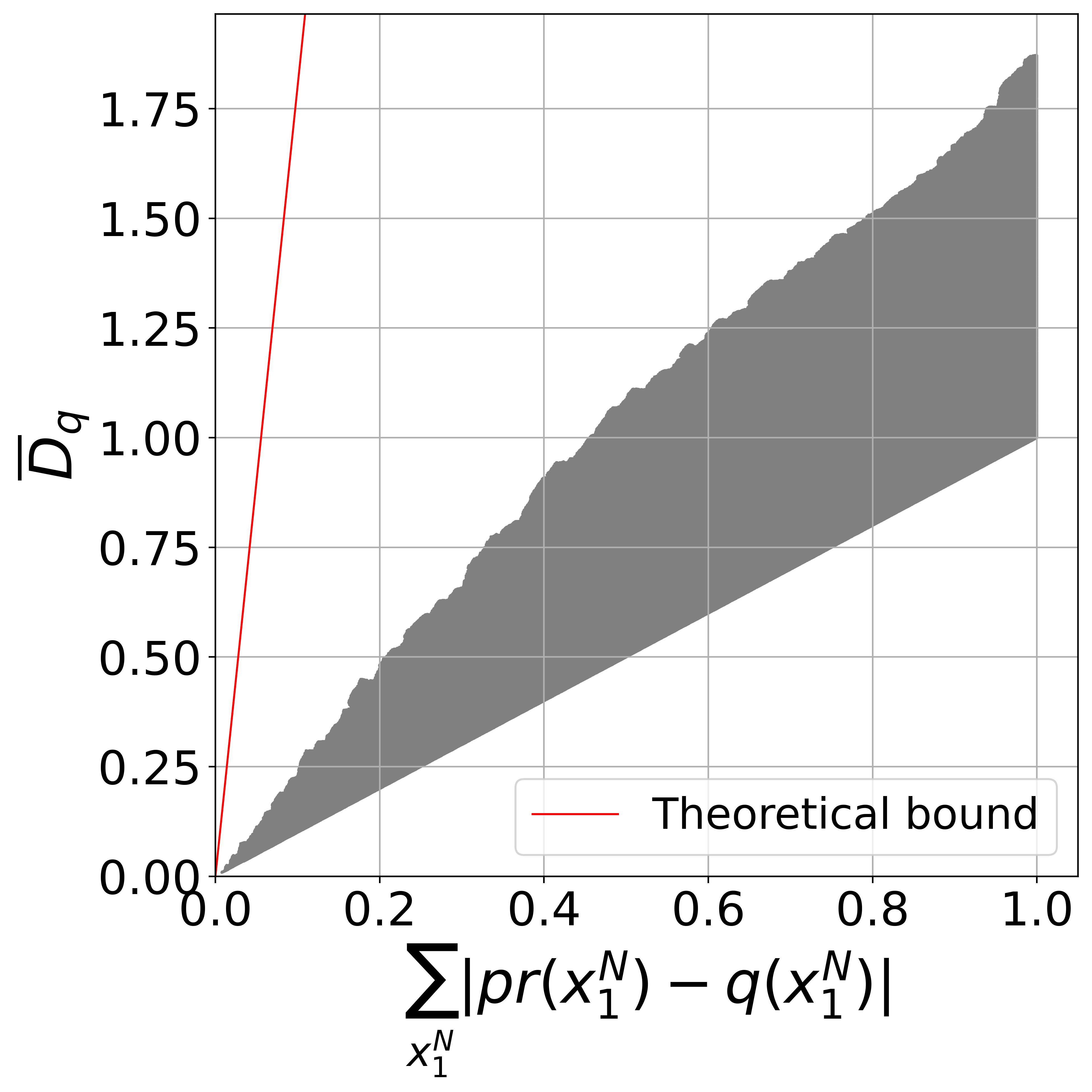}
    \caption{Simulation result for sequence-level bound between sequence-level marginal distributions $\sum_{x_1^N}|pr(x_1^N) - q(x_1^N)|$ and $\overline{D}_q$. The simulation is done with $|\mathcal{X}|=4$, $|\mathcal{C}|=3$, $N=3$, $\|\mathbf{P}_C^+\|_1 \leq 2$. $\mathbf{P}_C$ is guaranteed to have full column rank by conditioning that the minimum singular value is larger than 0.01. The grey dots refer to simulation points.}
    \label{fig:seq}
    \vspace{-0.2em}
\end{figure}

Figure \ref{fig:seq} shows the simulation results for Theorem \ref{theorem:seqineq}. The simulation was conducted by generating various distribution pairs $(pr,q)$ to cover the reachable areas as much as possible. The simulation result verifies the correctness of the bound.
\subsection{Sequence-Level Unsupervised Training Criterion}
By combining Ineq. \eqref{eq:basic}, \eqref{eq:global_seq} and applying Pinsker's inequality, we obtain the following bound between $\overline{\Delta}_q$ and the KL divergence $D_\text{KL}\big(pr(x_1^N)\|q(x_1^N)\big)$:\\
\scalebox{0.94}{\parbox{1.06\linewidth}{
\begin{align*}
    \big(\overline{\Delta}_q\big)^2 \leq \beta D_\text{KL}\big(pr(x_1^N)\|q(x_1^N)\big),
\end{align*}
}}
where $\beta = 2 N^4  \|\mathbf{P}_C^{+}\|^2_1$ is a fixed factor for a given training data, because the true distribution $pr$ is then fixed.
Therefore, minimizing $D_\text{KL}\big(pr(x_1^N)\|q(x_1^N)\big)$ also minimizes $\overline{\Delta}_q$. When $pr$ is fixed, minimizing the KL-divergence is equivalent to minimize the cross-entropy $H(pr(x_1^N), q(x_1^N))$. In practice, the true distribution $pr(x_1^N)$ is approximated by the empirical distribution $\Tilde{pr}(x_1^N) =\frac{1}{S} \sum_{s=1}^S \delta(x_1^N, x_{s,1}^N)$,
where $\{x_{s,1}^N\}_{s=1}^S$ are sequences in the unsupervised training data \cite{ney2003relationship, yang2025classification}. The sequence-level cross-entropy training criterion is:\\
\scalebox{0.93}{\parbox{1.07\linewidth}{
\vspace{-0.1em}
\begin{align*}
    \mathcal{L}(\theta) &=-\sum_{x_1^N} \Tilde{pr}(x_1^N) \log q(x_1^N)= - \frac{1}{S}\sum_{s=1}^S \log q_\theta(x_{s,1}^N) \\
    &= - \frac{1}{S}\sum_{s=1}^S \log \sum_{c_1^N} p_\text{LM}(c_1^N) q_\theta(x_{s,1}^N|c_1^N)
\end{align*}
}}
where $p_\text{LM}(c_1^N)$ is a language model (LM) on trained text data to approximate $pr(c_1^N)$. The summation over $c_1^N$ can be computed efficiently via dynamic programming for a limited context LM \cite{povey2016purely}; for a full-context LM, one can restrict the summation to a hypothesis space obtained by search. For discriminative models that output a posterior $q(c|x)$, it is possible to apply Bayes rule $q(x|c) = q(c|x) pr(x)/q(c)$ to obtain $q(x|c)$, similar to a HMM model. Overall, this loss enables one-stage optimization of the statistical model.

\section{Discussions}
In this section, we will show that when no further constraints are imposed, the two conditions we propose are also necessary for unsupervised speech recognition. To this end, we consider the exact-match case $pr(x_1^N)= q(x_1^N)$ and show that $\Delta_q>0 $ can still occur via explicit construction of distributions if either of these two conditions is violated.

\subsection{Necessity of the Full-Column Rank Condition}
\label{sec:necessary_full_column}
Consider a position-dependent unigram LM $pr(c_1^N) = \prod_n pr_n(c_n)$. Then, the probability $pr(x_1^N)$ can also be decomposed per position: $pr(x_1^N) = \prod_n pr_n(x_n)$, so as the model distribution $q(x_1^N)$. Due to the decomposition of $pr(c_1^N)$, $pr(x_1^N) \equiv q(x_1^N) \quad \text{iff } \forall n, pr_n(x_n) \equiv q_n(x_n)$. Then, finding the appropriate $q(x|c)$ becomes solving linear equations in \eqref{eq:loca}:\\
\scalebox{0.93}{\parbox{1.07\linewidth}{
\begin{align*}
    \mathbf{P}_C \big( \mathbf{pr}^C_{x} - \mathbf{q}^C_{x}\big) = \mathbf{0}
\end{align*}
}}
 When $\text{rank}(\mathbf{P}_C) =r \leq |\mathcal{C}|-1$, there are effectively in total $r\cdot |\mathcal{X}|$ linear equations. Considering the normalization of $q(x|c)$, there is in total $r\cdot   |\mathcal{X}| + |\mathcal{C}| $ number of linear equations, and $|\mathcal{C}||\mathcal{X}|$ number of variables to solve for $q(x|c)$. Since we assume $|\mathcal{X}|> |\mathcal{C}|$, we have $r|\mathcal{X}|+|\mathcal{C}|< |\mathcal{X}||\mathcal{C}|$. Therefore, the system is underdetermined. Moreover, with a position-dependent unigram LM, the decision rule also becomes local, which makes the problem similar to the single event case in Sec. \ref{sec:problem}, i.e. $\overline{\Delta}_q>0$ is possible. 

\subsection{Necessity of the Structure Assumption}
Assume $pr(c_1^N, x_1^N) = \prod_n pr_n(c_n,x_n)$. As in Sec. \ref{sec:necessary_full_column}, the exact-match case localizes to each position $n$:\\
\scalebox{0.93}{\parbox{1.07\linewidth}{
\begin{align*}
    \mathbf{P}_C \mathbf{q}^C_{x} = \mathbf{pr}^N_{x}
\end{align*}
}}
In general, although $\mathbf{P}_C^+$ exists under the full-column-rank assumption, a solution of $\mathbf{q}^C_{x}$ is not guaranteed. Even if there is a solution, which is then a unique one, a bound for error mismatch is not guaranteed. Let $\hat{q}(x|c)$ be the corresponding unique solution, for each $x$ and $n$:\\
\scalebox{0.94}{\parbox{1.06\linewidth}{
\begin{equation*}
\label{eq:structure_n}
         \sum_c pr_n(c) pr_n(x|c) = pr_n(x) = q_n(x) = \sum_c pr_n(c) \hat{q}(x|c)
\end{equation*}
}}
Since $pr_n(x|c)$ can be different for different $n$, the true distributions $\{pr_n(x|c)\}$ that satisfy the equation above is not unique. Therefore, for each position, it is similar to the single event case and $\overline{\Delta}_q>0$ is possible.

 \section{Conclusion}
We developed a theoretical framework for sequence-level unsupervised training in automatic speech recognition. We introduced two conditions under which unsupervised speech recognition is possible, and showed that, without further additional constraints, these conditions are also necessary. Under these two conditions, we further derived a classification error bound for unsupervised speech recognition. Guided by this bound, we propose a sequence-level cross-entropy loss for end-to-end unsupervised training.
{
\begin{spacing}{0.5}
{\footnotesize
\unnumberedsection{Acknowledgments}


\scriptsize
\selectfont
This work was partially supported by NeuroSys, which as part of the
initiative “Clusters4Future” is funded by the Federal Ministry of
Education and Research BMBF (funding IDs 03ZU2106DA).
}
\end{spacing}
}





\bibliographystyle{IEEEbib}
\bibliography{strings,refs}

\end{document}